\newcommand{\iid}{\operatorname{\stackrel{i.i.d.}{\sim}}}
\DeclareMathOperator*{\argmin}{\mbox{\rm argmin}}
\newtheorem{Def}{Definition}
\newtheorem{Cor}[Def]{Corollary}
\newtheorem{Th}[Def]{Theorem}
\newtheorem{Ex}[Def]{Example}
\newtheorem{Lem}[Def]{Lemma}
\newcommand{\cS}{\mathcal{S}}
\newcommand{\Prb}{\mathbb P}
\newcommand{\inas}{\operatorname{\stackrel{a.s.}{\to}}}
\newcommand{\EE}{\mathbb{E}}
\newcommand{\NN}{\mathbb{N}}
\newcommand{\R}{\mathbb{R}}
\newcommand{\Sp}{\mathcal{S}}
\newcommand{\N}{\mathbb{N}}
\newcommand{\m}{{\mu}}
\newcommand{\dis}{\mathbf{d}}
\newcommand{\bo}{\mathbf{0}}
\newcommand{\E}{\mathbb{E}}
\newcommand{\mycomment}[1]{}
\newcommand{\md}{\mathbf{m}_n}
\newcommand\DirectBibliography[2][Chicago]{%
  \begingroup
    \let\clearpage\relax
    \vspace{2em}
    \bibliographystyle{#1}
    \bibliography{#2}
  \endgroup
  \clearpage
}
\begin{document}

\title{Exploring Uniform Finite Sample Stickiness} 
\author{Susanne Ulmer\footnote{Georg-August-Universit\"at at G\"ottingen, Germany, Felix-Bernstein-Institute 
for Mathematical Statistics in the Biosciences,
 }, 
Do Tran Van$^*$ \and Stephan F. Huckemann$^*$}

\maketitle

\begin{abstract}
It is well known, that Fréchet means on non-Euclidean spaces may exhibit nonstandard asymptotic rates depending on curvature. Even for distributions featuring standard asymptotic rates, there are non-Euclidean effects, altering finite sampling rates up to considerable sample sizes. These effects can be measured by the variance modulation function proposed by \cite{Pennec2019}. Among others, in view of statistical inference, it is important to bound this function on intervals of sampling sizes. In a first step into this direction, for the special case of a K-spider 
we give such an interval, based only on folded moments and total probabilities of spider legs and illustrate the method by simulations.
 \end{abstract}

\section{Introduction to Stickiness and Finite Sample Stickiness}

Data analysis has become an integral part of science due to the growing amount of data in almost every research field. This includes a plethora of data objects that do not take values in Euclidean spaces, but rather in a non-manifold, stratified space. For statistical analysis in such spaces, it is therefore necessary to develop probabilistic concepts. 
\cite{F48} was one of the first to generalize the concept of an expected value to a random variable $X$ on an arbitrary metric space $(Q,\dis)$ as an minimizer of the expected squared distance:
\begin{equation} \label{frechetmeanforgeneralmetricspaces}
\m = \argmin_{p \in Q}\EE[d(X,p)^2]\,,
\end{equation} 
nowadays called a \emph{Fr\'echet mean} in his honor. Accordingly for a sample $X_1,\ldots, X_n$ $ \iid X$, its Fr\'echet mean is given by
\begin{equation} \label{frechetsamplemeanforgeneralmetricspaces}
\m_n = \argmin_{p \in Q} \sum_{j = 1}^{n} \dis^2(X_j,p)\,.
\end{equation}
While on general space, these means can be empty or set valued, on \emph{Hadamard spaces}, i.e. complete spaces of \emph{global nonpositive curvature} (NPC), due to completeness, these means exists under very general conditions, and due to simple connectedness and nonpositive curvatures, they are unique, e.g. \cite{Sturm2003}, just as their Euclidean kin. They also share a law of strong numbers, i.e. that
$$ \m_n \inas \m \,.$$ 
In contrast, however, their asymptotic distribution is  often not normal, even worse, for some random variables their mean may be on a singular point stratum, and there may be a random sample size $N\in \NN$ such that
$$ \m_n = \m\mbox{ for all } n\geq N\,.$$
This phenomenon has been called \emph{stickiness} by \cite{HHMMN13}. It  puts an end to statistical inference based on asymptotic fluctuation. While nonsticky means of random variables seem to feature the same asymptotic rate, as the Euclidean expected value, namely $1/\sqrt{n}$, it has been noted by \cite{huckemann2020data} that for rather large sample sizes, the rates appear to be larger. This contribution is the first to systematically investigate this effect of \emph{finite sample stickiness} on stratified spaces and we do this here for the model space of the $K$-spider introduced below. This effect is in some sense complementary to the effect of \emph{finite sample smeariness}, where finite sample rates are smaller than $1/\sqrt{n}$.
recently discovered by \cite{HundrieserEltznerHuckemann2020}.

\begin{Def} With the above notation, assuming an existing Fr\'echet function  $F(x) = \EE[\dis(X,x)^2] < \infty $ for all $x\in Q$, with existing Fr\'echet mean $\m$,
\begin{align} \label{variancemodulation}
\md = \frac{n \E \left[\dis^2(\m_n,\m) \right]}{\E \left[\dis^2(X,\m) \right]},
\end{align}
is the \emph{variance modulation} for sample size $n$ (see \cite{Pennec2019}), 
or simply \emph{modulation}.
\end{Def}

If $(Q,\dis)$ is Euclidean, then $\md=1$ for all $n\in \NN$, \emph{smeariness} governs the cases $\md \to \infty$, see \cite{HH15,EltznereHuckemann2019,eltzner2022geometrical}, finite sample smeariness the cases $1 < \md$, cf. 
\cite{DHH-GSI21, eltzner2021finite,eltzner2023diffusion}, stickiness the case that $\mu_n = \mu$ a.s. for $n>N$ with a finite random sample size $N$, see \cite{HHMMN13,H_Mattingly_Miller_Nolen2015, BardenLeOwen2013,Barden2018limiting}, and \emph{finite sample stickiness} 
the case that
$$ 0< \md < 1\mbox{ for nonsticky } \mu\,.$$
\begin{Def}
\label{definitionfssKspider} For a nonsticky mean, if there are integers $l \in \N_{\geq 2}$,  and $N\in N$ and $0<\rho < 1$ such that
\begin{align*}
0 < \md < 1- \rho
\end{align*}
for all $n \in \{N,N+1,..., N^l \}$ then $X$ is called \emph{finite sample sticky of level $\rho \in (0,1)$, with scale $l$ and basis $N$} .
\end{Def}

We note that \cite{Pennec2019} has shown that finite sample stickiness affects all affine connection manifolds with constant negative sectional curvature.
We conjecture that this is also the case for general Hadamard spaces. 

A very prominent example of a nonmanifold Hadamard space is given by the \emph{BHV tree spaces} introduced by  \cite{BilleraHolmesVogtmann2001} modeling phylogenetic descendance trees. 
For a fixed number of species or taxa the BHV-space models all  different tree topologies, where within each topology, lengths of internal edges reflect evolutionary mutation from unknown ancestors. For three taxa, there are three topologies featuring nonzero internal edges and a fourth one, the \emph{star tree} featuring no internal edge. The corresponding BHV space thus carries the structure of a 3-spider as depicted in Figure \ref{fig:BHV-3}. 
For illustration of argument, in this contribution we consider $K$-spiders.

\begin{figure}
    \centering
\begin{tikzpicture}    

\draw[orange, thick] (0,0) -- ({1*cos(0)}, {1*sin(0)});
\filldraw[orange] (1,0) circle (2pt);
\draw[gray, thick] ({1*cos(0)}, {1*sin(0)}) -- ({2*cos(0)}, {2*sin(0)});
\draw[gray, thick] (0,0) -- ({2*cos(120)}, {2*sin(120)});
\draw[gray, thick] ({1.5*cos(240)}, {1.5*sin(240)}) -- ({2*cos(240)}, {2*sin(240)});
\filldraw[black] ({2*cos(0)}, {2*sin(0)}) circle (0pt) node[right](13_02){13/02};
\filldraw[black] ({2*cos(120)}, {2*sin(120)}) circle (0pt) node[above](12_03){12/03};
\filldraw[black] ({2*cos(240)}, {2*sin(240)}) circle (0pt) node[below](23_01){23/01};

\draw[blue, thick] (0,0) -- ({0.5*cos(120)}, {0.5*sin(120)});
\filldraw[blue] ({0.5*cos(120)}, {0.5*sin(120)}) circle (2pt);

\draw[green, thick] (0,0) -- ({1.5*cos(240)}, {1.5*sin(240)});
\filldraw[green] ({1.5*cos(240)}, {1.5*sin(240)}) circle (2pt);

\draw[orange, thin] (5,-0.5) -- (4.5, -1);
\draw[black, thin] (4.5,-1) -- (4, -1.5);
\draw[black, thin] (4.5,-1) -- (5, -1.5);
\draw[black, thin] (5,-0.5) -- (6, -1.5);
\draw[black, thin] (5,-0.5) -- (5,-0.25);

\filldraw[black] (5, -0.25) circle (0pt) node[above](){0};

\filldraw[black] (5, -1.5) circle (0pt) node[below](){3};
\filldraw[black] (4, -1.5) circle (0pt) node[below](){1};
\filldraw[black] (6, -1.5) circle (0pt) node[below](){2};


\draw[orange, dashed, <-> ] plot[smooth] coordinates {(1.5,-0.25) (2,-0.5) (3,-0.7) (4.1, -0.72)};
\filldraw[black] (0.5,1) circle (0pt) node[below](){1};
\filldraw[black] (1.5,1) circle (0pt) node[below](){2};
\filldraw[black] (2.5,1) circle (0pt) node[below](){3};
\filldraw[black] (1.5,2.25) circle (0pt) node[above](){0};

\draw[black, thin] (0.5,1) -- (1.5,2);
\draw[black, thin] (1.5,1) -- (1.5,2);
\draw[black, thin] (2.5,1) -- (1.5,2);
\draw[black, thin] (1.5,2) -- (1.5,2.25);

\filldraw[gray] (0,0) circle (2pt);
\draw[gray,dashed, <->] plot[smooth] coordinates {(0.1,0.1) (0.4,0.2) (0.7,0.3) (0.8, 0.4)};


\filldraw[black] (-2, 0.5) circle(0pt) node[below](){2};
\filldraw[black] (-3.5,0.5) circle(0pt) node[below](){1};
\filldraw[black] (-4, 0.5) circle(0pt) node[below](){3};
\filldraw[black] (-3, 1.75) circle (0pt) node[above](){0};

\draw[black, thin] (-2,0.5 ) -- (-2.75,1.25);
\draw[blue, thin] (-2.75,1.25) -- (-3,1.5);
\draw[black,thin] (-2.75,1.25) -- (-3.5,0.5);
\draw[black,thin] (-3,1.5)--(-4,0.5);
\draw[black,thin] (-3,1.5)--(-3,1.75);

\draw[blue,dashed, <->] plot[smooth] coordinates {(-2,0.8) (-1.2, 0.75)({0.5*cos(120)-0.2}, {0.5*sin(120)})};


\draw[black,thin] (0,-2)--(0.25,-1.75);
\draw[black,thin] (0.25,-1.75) --(0.5,-2);

\draw[green,thin] (0.25,-1.75) --(1,-1);

\draw[black,thin] (2,-2) --(1,-1);
\draw[black,thin] (1,-0.75) --(1,-1);

\filldraw[black] (0,-2) circle(0pt) node[below](){2};
\filldraw[black] (2,-2) circle(0pt) node[below](){1};
\filldraw[black] (0.5,-2) circle(0pt) node[below](){3};
\filldraw[black] (1,-0.75) circle (0pt) node[above](){0};

\draw[green, dashed, <->] plot[smooth] coordinates {({1*cos(240)+0.2}, {1*sin(240)}) (0.2, -1.) (0.4, -1.2)};

\filldraw (-3,-1.5) circle(0pt) node[minimum size=3cm] {$\mathcal{S}_3$};
\end{tikzpicture}
    \caption{Four different phylogenetic descendance trees for three taxa featuring one or none internal edge, modeled on the 3-Spider $\cS_3$.}
    \label{fig:BHV-3}
\end{figure}
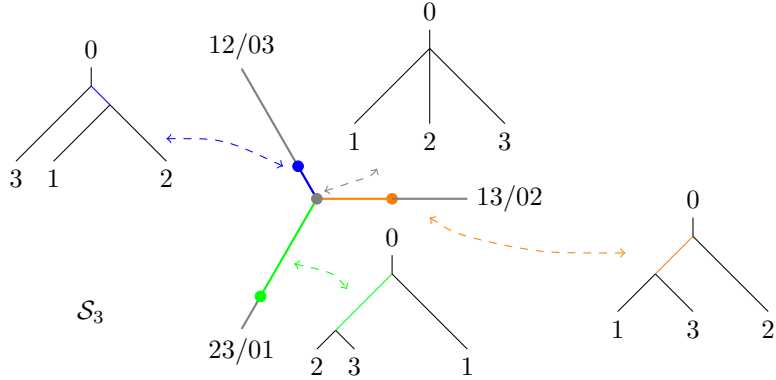





\section{Model Space: The K-Spider}
The following has been taken from \cite{HHMMN13}.
\begin{Def} \label{definitionspider}
For $3\leq K \in \NN$ the $K$\emph{-spider} $\cS_K$ is the space
\begin{equation*}
\Sp_K = [0, \infty ) \times \{1,2,...,K\} / \sim
\end{equation*}
where for $i,k \in \{1,2,..., K\}$ and $x\geq 0$, $(x,i) \sim (x,k)$  if $k=i$ or $x=0$. The equivalence class of $(0,1)$ is identified with the \emph{origin} $\bo$, so that $\cS_K = \{\bo\}\cup \bigcup_{k=1}^K L_k$ with the positive half-line $L_k := (0, \infty) \times \{k\}$ called the \emph{$k$-th leg}. Further, for any $k \in \{1,2,...,K\}$ the map
\begin{align*}
F_{k}: \Sp_K &\to \mathbb{R},\\ (x,i) &\mapsto \begin{cases} x & if \quad  i= k,  \\ -x & else \end{cases}
\end{align*} 
is called the \textit{$k$-th folding map of $\Sp_K$}.

The first two \emph{folded moments on the $k$-th leg} of a random variable $X$ are
\begin{equation*}
m_k = \EE[F_k(X)],\quad \sigma^2_k =  \EE[(F_k(X)-m_k)^2]
\end{equation*}
in population version and the first folded moment in sample version for random variables $X_1,\ldots,X_n$ is
\begin{align*}
\eta_{n,k} = 
\frac{1}{n} \sum_{j = 1}^n F_k(X_j)\,.
\end{align*}

We say the $X$ is \emph{nondegnerate} if $\Prb\{X \in L_k\} >0$ for at least three different $k \in \{1,\ldots,K\}$.

\end{Def}

\begin{figure} [h]
\centering
\begin{tikzpicture} 

\draw[orange, thick] (0,0) -- node[right]{$2$}(0.6,1.8);
\draw[blue, thick] (0,0) -- node[below]{$3$}(-1.8,0);
\draw[green, thick] (0,0) -- node[left]{$1$} (1.05,-1.5);

\draw[black, thin, ->] (1.5,0) --  node[above]{$F_1$}(3,0);
\draw[orange, thick] (3.5,0) -- (5,0);
\draw[dashed, blue, thick] (3.5,0) -- (5,0);
\draw[green, thick] (5,0) -- (7,0);

\filldraw[gray] (0,0) circle (1pt) node[right]{  $\sim \{0 \}$};

\filldraw[gray] (5,0) circle (1pt) node[above]{$0$};

\end{tikzpicture}
\caption{The first folding map $F_1$ on the Three-Spider $\Sp_3$. The leg labelled $1$ is mapped to the positive real line. The second and third leg are folded and then mapped to the negative half line.}
\end{figure}
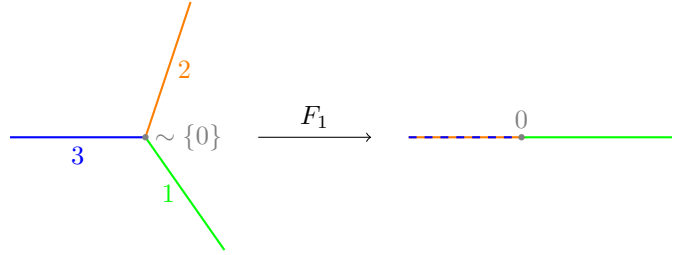

\begin{Lem} \label{folded_average_related_to_folded_sample_mean}
For a sample of size $n$ of a nondegenerate random variable, we have for every $1\leq k\leq K$ that
\begin{eqnarray}\label{eq:lem8-0}\eta_{n,k} > 0 &\Leftrightarrow& \mu_n \in L_k~\Leftrightarrow~F_k(\mu_n) > 0\end{eqnarray}
whereas 
$$\eta_{n,k} = 0 \Rightarrow \mu_n = \bo \in \cS_K \Rightarrow \eta_{n,k} \leq 0\,.$$
In particular:  
\begin{eqnarray}\label{eq:lem8-1}
\eta_{n,k} \geq 0 &\Leftrightarrow& F_k(\m_n) = \eta_{n,k}
\end{eqnarray}
and 
\begin{eqnarray}\label{eq:lem8-2}
\eta_{n,k} < 0 &\Leftrightarrow &\eta_{n,k} < F_k(\m_n)\,.\end{eqnarray}
\end{Lem}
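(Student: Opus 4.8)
The plan is to use the fact that, restricted to any single closed leg $L_k\cup\{\bo\}$, the sample Fréchet functional $G(p):=\sum_{j=1}^n\dis^2(X_j,p)$ collapses to a one-dimensional quadratic whose coefficients are exactly the folded moments, and then to compare these $K$ quadratics against one another. First I would verify the pointwise identity $\dis\big(X_j,(t,k)\big)=|F_k(X_j)-t|$ for every $t\ge 0$: if $X_j\in L_k$ this is $|\dis(X_j,\bo)-t|$, and otherwise (including $X_j=\bo$) the connecting geodesic runs through $\bo$, so the distance is $\dis(X_j,\bo)+t=-F_k(X_j)+t=|F_k(X_j)-t|$ because $F_k(X_j)\le 0\le t$. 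Consequently $g_k(t):=G\big((t,k)\big)=\sum_{j=1}^n\big(F_k(X_j)-t\big)^2=C-2n\,\eta_{n,k}\,t+nt^2$, where $C:=\sum_{j=1}^n F_k(X_j)^2=\sum_{j=1}^n\dis^2(X_j,\bo)$ is \emph{independent of $k$}. Hence the minimum of $g_k$ over $[0,\infty)$ is $C-n\,\eta_{n,k}^2$, attained uniquely at $t=\eta_{n,k}$, when $\eta_{n,k}\ge 0$, and is $C$, attained uniquely at $t=0$, when $\eta_{n,k}<0$.

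Next I would show at most one leg can ``win''. Writing $s_k=\sum_{X_j\in L_k}\dis(X_j,\bo)$ and $s=\sum_{i=1}^K s_i$, a direct computation gives $\eta_{n,k}=\tfrac1n(2s_k-s)$, so $\eta_{n,k}>0\iff s_k>\tfrac12 s$, which can hold for at most one index; call it $k_0$ when it exists. Since $\cS_K=\bigcup_{k}\big(L_k\cup\{\bo\}\big)$, the global minimum of $G$ is the least of the $K$ leg-minima just computed. If some $\eta_{n,k_0}>0$, then leg $k_0$ contributes $C-n\,\eta_{n,k_0}^2<C$ while every other leg contributes $C$ (as $\eta_{n,k}\le 0$ there), so $G$ is minimised uniquely at $(\eta_{n,k_0},k_0)\in L_{k_0}$; if instead $\eta_{n,k}\le 0$ for all $k$, every leg-minimum equals $C$ and is attained only at $t=0$, so $\mu_n=\bo$. (Uniqueness is also guaranteed because the spider is a Hadamard space.)

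The stated equivalences then follow by bookkeeping. From the dichotomy just established, $\eta_{n,k}>0\Rightarrow\mu_n=(\eta_{n,k},k)\in L_k$, and $\mu_n\in L_k\Rightarrow F_k(\mu_n)=\dis(\mu_n,\bo)>0$ by definition of $L_k$ and $F_k$; conversely $F_k(\mu_n)>0$ forces $\mu_n\in L_k$, hence we are in the winning-leg case with $k_0=k$ and $\eta_{n,k}>0$, giving \eqref{eq:lem8-0}. If $\eta_{n,k}=0$ then $s_k=\tfrac12 s$, so no $s_{k'}$ exceeds $\tfrac12 s$, we are in the second alternative, and $\mu_n=\bo$; and $\mu_n=\bo\Rightarrow F_k(\mu_n)=0$, which by \eqref{eq:lem8-0} forces $\eta_{n,k}\le 0$. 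For \eqref{eq:lem8-1}: $\eta_{n,k}\ge 0$ yields $F_k(\mu_n)=\eta_{n,k}$ directly (with $\mu_n=\bo$ in the boundary case); conversely, if $\eta_{n,k}<0$ then $F_k(\mu_n)=\eta_{n,k}<0$ would place $\mu_n$ on some leg $k_0\ne k$ with $F_k(\mu_n)=-\eta_{n,k_0}$, whence $\eta_{n,k}=-\eta_{n,k_0}$, i.e. $s_k+s_{k_0}=s$, so the whole sample would sit on two legs — which nondegeneracy rules out. Finally, \eqref{eq:lem8-2} is the complement of \eqref{eq:lem8-1} once one checks $\eta_{n,k}\le F_k(\mu_n)$ always: this is immediate when $\mu_n\in L_k\cup\{\bo\}$ (then $F_k(\mu_n)=\max(\eta_{n,k},0)$), and when $\mu_n\in L_{k_0}$ with $k_0\ne k$ it amounts to $2s_k-s\le s-2s_{k_0}$, i.e. $s_k+s_{k_0}\le s$, which always holds.

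The reduction in the first step is routine; the real work is the cross-leg comparison — the uniqueness of the winning leg, and the way $F_k(\mu_n)$ is read off from $\eta_{n,k_0}$ when $\mu_n$ lies on a leg other than $k$. The single delicate point is the borderline configuration in which the sample concentrates on exactly two legs: there $\eta_{n,k}$ and $F_k(\mu_n)$ coincide for the ``losing'' leg even though $\eta_{n,k}<0$, so it is precisely here that the nondegeneracy hypothesis must be invoked — in the form that the sample realises at least three distinct legs — to obtain the converse implication in \eqref{eq:lem8-1} and the strict inequality in \eqref{eq:lem8-2}.
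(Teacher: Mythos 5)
Your argument is correct, but it takes a more self-contained route than the paper. The paper imports all assertions except \eqref{eq:lem8-2} from \citet[Lemma 3.3]{HHMMN13} and only proves the strict inequality, via the decomposition $\eta_{n,k}=h_{n,k}-\sum_{i\neq k}h_{n,i}$ with $h_{n,i}=\frac1n\sum_{X_j\in L_i}F_i(X_j)$ and the observation that if $\mu_n\in L_{k'}$, $k'\neq k$, then $F_k(\mu_n)=-\eta_{n,k'}=\eta_{n,k}+2\sum_{i\neq k,k'}h_{n,i}>\eta_{n,k}$. You instead re-derive the cited equivalences from scratch: the pointwise identity $\dis(X_j,(t,k))=|F_k(X_j)-t|$ reduces the sample Fr\'echet function on each closed leg to a quadratic with leg-independent constant term, the ``at most one winning leg'' dichotomy ($\eta_{n,k}>0\Leftrightarrow s_k>\tfrac12 s$) identifies $\mu_n$ explicitly, and your final step for \eqref{eq:lem8-1}--\eqref{eq:lem8-2} (equality forces $s_k+s_{k_0}=s$, i.e.\ the sample is supported on two legs) is, with $h_{n,i}=s_i/n$, exactly the paper's computation. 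What your version buys is independence from the citation and an explicit geometric picture of where $\mu_n$ sits; the paper buys brevity. Note that both you and the paper lean on nondegeneracy at the same point, and you are in fact the more careful party: the strict inequality in \eqref{eq:lem8-2} really needs a sample point strictly on a third leg, which the population-level hypothesis $\Prb\{X\in L_i\}>0$ for three legs does not guarantee for a fixed finite sample; the paper writes ``due to nondegeneracy'' just as you do, so this is a shared caveat rather than a defect of your proof relative to the paper's.
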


\begin{proof} All but the last assertion are from
\citet[Lemma 3.3]{HHMMN13}. The last is also in the proof of \citet[Lemma 4.1.4]{HE24Foundations}, we prove it here for convenience. 

Letting $h_{n,i} = \frac{1}{n} \sum_{X_j \in L_i} F_i(X_j)$ for $i\in \{1,\ldots,K\}$ we have 
\begin{eqnarray}\label{eq:eta} 
\eta_{n,k} &=& h_{n,k}- \sum_{\substack{ i\neq k\\ 1\leq i\leq K}} h_{n,i}\,.
\end{eqnarray}
If $\eta_{n,k} < 0$ then there must be $k\neq k' \in\{1,\ldots,K\}$ with $\mu_n \in L_{k'}$ and thus
\begin{eqnarray*}
0~< ~-F_k(\mu_n) ~=~ F_{k'}(\mu_n)~=~\eta_{n,k'} &=& h_{n,k'} - h_{n,k}- \sum_{\substack{ i\neq k,k'\\ 1\leq i\leq K}} h_{n,i}\\& =& - \eta_{n,k} - 2\sum_{\substack{ i\neq k,k'\\ 1\leq i\leq K}} h_{n,i} ~<~ -\eta_{n,k}
\end{eqnarray*}
as asserted, due to nondegenaracy.
\qed\end{proof}

The case, $m_k < F_k(\mu)$ for all $k \in \{1,\ldots,K\}$ governs stickiness, cf. \cite{HHMMN13}, and the case of $\eta_{n,k} < F_k(\mu_n) \leq 0 <  F_k(\mu) =m_k$ for some $k\in \{1,\ldots,K\}$ governs finite sample stickiness, as detailed below. For this reason, we consider the following.

\begin{Cor}\label{cor:folded_average}
In case of $m_k >0$, 
we have
$\vert F_k(\m_n)- m_k\vert^2 \le \vert \eta_{n,k} - m_k\vert^2$ and $n\EE [\dis^2(\mu_n,\mu)] = n\EE [\vert F_k(\m_n)- m_k\vert^2] \le \sigma_k^2 \,.$ 
\end{Cor}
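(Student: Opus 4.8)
The plan is to exploit Lemma \ref{folded_average_related_to_folded_sample_mean} to reduce the two-sided distance comparison to a one-sided statement about the folded sample mean, and then control the residual term using only that $F_k(\mu_n) \geq 0$ whenever $F_k(\mu_n) \neq \eta_{n,k}$.

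\medskip

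\textbf{Step 1: pointwise comparison.} Fix a sample $X_1,\ldots,X_n$ and the leg $k$ with $m_k > 0$. I distinguish two cases according to the sign of $\eta_{n,k}$. If $\eta_{n,k} \geq 0$, then by \eqref{eq:lem8-1} we have $F_k(\mu_n) = \eta_{n,k}$, so $|F_k(\mu_n) - m_k|^2 = |\eta_{n,k} - m_k|^2$ with equality. If $\eta_{n,k} < 0$, then by the implication $\eta_{n,k} = 0 \Rightarrow \mu_n = \bo$ together with \eqref{eq:lem8-0}–\eqref{eq:lem8-2}, $\mu_n$ lies on some other leg or at the origin, hence $F_k(\mu_n) \leq 0$; combined with $\eta_{n,k} < 0$ we get that $F_k(\mu_n)$ lies in the interval $[\eta_{n,k}, 0] \subset [\eta_{n,k}, m_k]$ (using $m_k > 0$ and $\eta_{n,k} < F_k(\mu_n)$ from \eqref{eq:lem8-2}). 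Since $m_k > 0$ and $F_k(\mu_n) \in [\eta_{n,k}, m_k]$, the point $F_k(\mu_n)$ is at least as close to $m_k$ as the endpoint $\eta_{n,k}$ is, i.e. $|F_k(\mu_n) - m_k|^2 \leq |\eta_{n,k} - m_k|^2$. In both cases the pointwise inequality holds.

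\medskip

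\textbf{Step 2: the distance identity.} Because $m_k > 0$, one has $F_k(\mu) = m_k$ (this is exactly the characterization $\eta$-analogue at the population level, or simply: $m_k > 0$ forces $\mu \in L_k$, so $F_k(\mu) = \dis(\mu,\bo) = m_k$). For any point $q \in \cS_K$, the spider distance satisfies $\dis(q,\mu)^2 = |F_k(q) - F_k(\mu)|^2$ whenever $\mu \in L_k$: indeed if $q \in L_k$ then $F_k$ is an isometry onto $[0,\infty)$ along that leg, and if $q \notin L_k$ then $\dis(q,\mu) = \dis(q,\bo) + \dis(\bo,\mu) = (-F_k(q)) + m_k = m_k - F_k(q) = |F_k(q) - F_k(\mu)|$ since $F_k(q) \leq 0$ in that case. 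Applying this with $q = \mu_n$ gives $\dis^2(\mu_n,\mu) = |F_k(\mu_n) - m_k|^2$.

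\medskip

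\textbf{Step 3: conclude.} Combining Steps 1 and 2, $\dis^2(\mu_n,\mu) = |F_k(\mu_n) - m_k|^2 \leq |\eta_{n,k} - m_k|^2$ pointwise, which is the first claimed inequality. Taking expectations and multiplying by $n$,
\[
n\,\EE[\dis^2(\mu_n,\mu)] = n\,\EE[|F_k(\mu_n) - m_k|^2] \leq n\,\EE[|\eta_{n,k} - m_k|^2].
\]
Finally, $\eta_{n,k} = \frac{1}{n}\sum_{j=1}^n F_k(X_j)$ is the empirical mean of the i.i.d.\ variables $F_k(X_j)$ with common mean $m_k = \EE[F_k(X)]$ and variance $\sigma_k^2 = \EE[(F_k(X) - m_k)^2]$, so $\EE[|\eta_{n,k} - m_k|^2] = \sigma_k^2/n$, giving $n\,\EE[|\eta_{n,k} - m_k|^2] = \sigma_k^2$ and hence $n\,\EE[\dis^2(\mu_n,\mu)] \leq \sigma_k^2$.

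\medskip

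\textbf{Expected main obstacle.} The analytic content is light; the only genuinely delicate point is Step 1 in the case $\eta_{n,k} < 0$, where one must verify carefully that $F_k(\mu_n)$ and $\eta_{n,k}$ lie on the \emph{same side} of $m_k$ (namely both $\leq m_k$) so that projecting onto the closer point is legitimate — this is precisely where the one-sided consequence $\eta_{n,k} \leq F_k(\mu_n) \leq 0 < m_k$ of Lemma \ref{folded_average_related_to_folded_sample_mean} is indispensable, and without it the squared-distance comparison could fail. Everything else is bookkeeping with the spider metric and the elementary variance identity for empirical means.
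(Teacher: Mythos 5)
Your proof is correct and follows essentially the same route as the paper: a case distinction on the sign of $\eta_{n,k}$, using \eqref{eq:lem8-1} for equality when $\eta_{n,k}\geq 0$ and \eqref{eq:lem8-0}, \eqref{eq:lem8-2} to get $\eta_{n,k} < F_k(\m_n)\leq 0 < m_k$ otherwise. Your Steps 2 and 3 (that $m_k>0$ forces $\mu\in L_k$ with $F_k(\mu)=m_k$, the identity $\dis(\m_n,\mu)=\vert F_k(\m_n)-m_k\vert$, and $\EE[\vert\eta_{n,k}-m_k\vert^2]=\sigma_k^2/n$) merely make explicit what the paper's proof leaves implicit, so there is no substantive difference.
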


\begin{proof}
If $\eta_{n,k}  \geq 0$
 we have by  (\ref{eq:lem8-1}) that $F_k(\m_n)= \eta_{n,k}$ so that $\vert F_k(\m_n)- m_k\vert] = \vert F_k(\m_n)- m_k\vert$. If $\eta_{n,k} < 0$ we have by  (\ref{eq:lem8-2}) that $\eta_{n,k} < F_k(\mu_n)$ and hence $\vert \eta_{n,k}- m_k\vert =  m_k - \eta_{n,k} >  m_k - F_k(\mu_n) = \vert m_k - F_k(\mu_n)  \vert $, where the last equality is due to (\ref{eq:lem8-0}).
 
 \end{proof}
\section{Estimating Uniform Finite Sample Stickiness}
Denote  the standard-normal-cdf by $\Phi(x) = \frac{1}{\sqrt{2 \pi}} \int_{- \infty}^x \exp(-t^2/2)dt$.

\begin{Th}\label{berryesseentheorem}
\textbf{(Berry-Esseen, \citet[p. 42]{esseen_fourier_1945}, \cite{shevtsova2011})}
Let $Z_1,Z_2,...,Z_n$ be iid random variables on $\R$ with mean zero and finite third moment $\E\left[ \vert Z_1\vert^3 \right] < \infty$. Denote by $\sigma^2$ the variance and by $\hat F_n(z)$ the cumulative distribution function of the random variable
\begin{align*}
\frac{Z_1+Z_2 + ... + Z_n}{\sqrt{n}\sigma}.
\end{align*}
Then there is a finite positive constant $C_S\leq 0.4748$ such that
\begin{align} \label{berryesseeninequality}
\vert \hat F_n(z) - \Phi(z) \vert \le  \frac{\E\left[ \vert Z_1\vert^3 \right]}{\sqrt{n} \sigma^3}C_S.
\end{align}
\end{Th}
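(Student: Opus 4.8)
The plan is to run the classical Fourier-analytic argument via Esseen's smoothing inequality. Put $W_j = Z_j/\sigma$, so that $\E[W_j]=0$, $\E[W_j^2]=1$ and $\beta := \E[|W_1|^3] = \E[|Z_1|^3]/\sigma^3 < \infty$, and let $\varphi(t)=\E[e^{itW_1}]$ be the characteristic function of $W_1$. Then the characteristic function of the normalized sum with cdf $\hat F_n$ is $f_n(t)=\varphi(t/\sqrt n)^n$, while the standard normal has characteristic function $e^{-t^2/2}$. Esseen's smoothing lemma gives, for every $T>0$,
\begin{align*}
\sup_z |\hat F_n(z) - \Phi(z)| \le \frac{1}{\pi}\int_{-T}^{T}\left|\frac{f_n(t) - e^{-t^2/2}}{t}\right|\,dt + \frac{c_0}{T},
\end{align*}
where $c_0 = \tfrac{24}{\pi\sqrt{2\pi}}$ arises from $\sup_x|\Phi'(x)| = 1/\sqrt{2\pi}$. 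So it remains to bound the integral and to choose $T$.

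The next step is a pointwise estimate of $|f_n(t)-e^{-t^2/2}|$ on the range $|t|\le T$. From the Taylor bound $|\varphi(s)-1+s^2/2|\le \tfrac{\beta}{6}|s|^3$ (valid for all $s$ since $\E|W_1|^3<\infty$) one gets, for $|s|$ small compared to $1/\beta$, the domination $|\varphi(s)|\le 1-\tfrac{s^2}{2}+\tfrac{\beta}{6}|s|^3\le e^{-s^2/3}$, hence $|\varphi(t/\sqrt n)|^n\le e^{-t^2/3}$ whenever $|t|\le c_1\sqrt n/\beta$ for a suitable absolute $c_1$. Combining the elementary inequality $|z^n-w^n|\le n\,|z-w|\max(|z|,|w|)^{n-1}$ with $|\varphi(t/\sqrt n)-e^{-t^2/(2n)}|\le \tfrac{\beta|t|^3}{6n^{3/2}}+\tfrac{t^4}{8n^2}$ yields, on that range, a bound of the form $|f_n(t)-e^{-t^2/2}|\lesssim \big(\tfrac{\beta|t|^3}{\sqrt n}+\tfrac{t^4}{n}\big)e^{-t^2/3}$. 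Dividing by $|t|$ and integrating (the Gaussian factor makes the integral finite) produces an integral term of order $\beta/\sqrt n$, and choosing $T\asymp \sqrt n/\beta$ makes the remainder $c_0/T$ also of order $\beta/\sqrt n$. This already delivers inequality \eqref{berryesseeninequality} with some explicit absolute constant.

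The genuinely hard part is not the $O(\beta/\sqrt n)$ rate, which the crude argument above gives with a large constant, but the sharp value $C_S\le 0.4748$. Extracting it requires a considerably more delicate treatment: Prawitz's sharpening of the smoothing inequality (equivalently, the optimal truncation kernel), a subdivision of the $t$-axis into several regimes, control of the large-$|t|$ contribution via non-uniform bounds, and a careful numerical optimization over the truncation parameter. Since the existence of a finite constant is precisely \citet[p.~42]{esseen_fourier_1945} and the value $0.4748$ is the result of \cite{shevtsova2011}, at this point I would invoke those references rather than reproduce their optimization. \qed
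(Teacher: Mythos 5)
The paper itself offers no proof of this statement: Theorem \ref{berryesseentheorem} is quoted verbatim as a classical result, with the existence of a finite constant attributed to \citet[p.~42]{esseen_fourier_1945} and the numerical bound $C_S\le 0.4748$ to \cite{shevtsova2011}. Your proposal is therefore not in conflict with anything in the paper; it simply does more than the paper does, namely sketch the standard Fourier-analytic route (Esseen smoothing inequality with remainder $\tfrac{24}{\pi\sqrt{2\pi}\,T}$, the Taylor bound $|\varphi(s)-1+s^2/2|\le \beta|s|^3/6$, the domination $|\varphi(t/\sqrt n)|^n\le e^{-t^2/3}$ on $|t|\lesssim \sqrt n/\beta$, and the choice $T\asymp\sqrt n/\beta$), and that sketch is correct: it yields \eqref{berryesseeninequality} with some absolute constant of moderate size. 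Two small points worth making explicit if you flesh it out: you tacitly use $\beta=\E|W_1|^3\ge(\E W_1^2)^{3/2}=1$ (Lyapunov) both to justify the range restrictions and to absorb the $t^4/n$ term into the $O(\beta/\sqrt n)$ bound, and the case of very small $n$ (where $\beta/\sqrt n$ exceeds any fixed threshold) is trivial since the left-hand side of \eqref{berryesseeninequality} is at most $1$. You are also right that this elementary argument cannot produce $0.4748$; the theorem as stated, with that constant, rests on the cited optimization of \cite{shevtsova2011}, which is exactly how the paper treats it, so deferring that part to the references is appropriate.
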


For all of the following let $X_1,\ldots,X_n\iid X$ be nondegenerate random variables on $\cS_K$ with Fr\'echet mean $\m \in L_k$ for some $k\in \{1,\ldots,K\}$ and
$
\E \left[ \dis^3(X,\mathbf 0) \right] < \infty$. 
Let
\begin{align*}
    p_n &= \sum_{i = 1}^K \Phi\left( \frac{\sqrt{n} m_i}{\sigma_i} \right) + \sum_{i = 1}^K \frac{\mathbb{E} \left[ \vert F_i(X)-m_i\vert|^3\right]}{\sqrt{n} \sigma_i^3} C_S\,, \\
p_{n,k} &= \Phi\left( \frac{\sqrt{n} m_k}{\sigma_k} \right) - \frac{\mathbb{E} \left[ \vert F_k(X)-m_k\vert|^3\right]}{\sqrt{n} \sigma_k^3} C_S.
\end{align*}
For $i\in \{1,\ldots,K\}$ set
\begin{align*}
    A_i = \{ \eta_{n,i} = F_i(\m_n)\}\,,\quad A = A_1\cup \ldots \cup A_K\,.
\end{align*}
Due nondegeneracy, (\ref{eq:eta}) implies that the $A_i$ ($i=1,\ldots,K$) are disjoint, see also \cite[Theorem 2.9]{HHMMN13}.
\begin{Lem} \label{tilde_p_n}
For all $n \in \mathbb{N}$, $\mathbb{P} \left( A_k \right) \geq p_{n,k}$ and $\mathbb{P} \left( A \right) \leq {p}_n. $
\end{Lem}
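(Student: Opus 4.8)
The plan is to recognise each event $A_i$ as a one‑sided event for a standardised i.i.d.\ sum on $\R$ and then read off both bounds from the Berry--Esseen inequality (Theorem~\ref{berryesseentheorem}). First I would invoke (\ref{eq:lem8-1}) of Lemma~\ref{folded_average_related_to_folded_sample_mean}, which states exactly that $A_i = \{\eta_{n,i}\ge 0\}$. Setting $\widehat S_{n,i} = \frac{1}{\sqrt n\,\sigma_i}\sum_{j=1}^n\bigl(F_i(X_j)-m_i\bigr)$ and dividing the inequality $\eta_{n,i}\ge 0$ by the positive number $\sigma_i/\sqrt n$, this reads $A_i = \{\widehat S_{n,i}\ge -\sqrt n\,m_i/\sigma_i\}$. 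I would note in passing that the Berry--Esseen hypotheses hold in each leg: since $|F_i(X)| = \dis(X,\bo)$, the standing assumption $\E[\dis^3(X,\bo)]<\infty$ yields $\E[|F_i(X)-m_i|^3]<\infty$, and nondegeneracy forces $\sigma_i>0$, so the standardisation is legitimate.

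Next I would apply Theorem~\ref{berryesseentheorem} to $Z_j = F_i(X_j)-m_i$, obtaining that the cumulative distribution function $G_{n,i}$ of $\widehat S_{n,i}$ satisfies $|G_{n,i}(z)-\Phi(z)|\le\delta_{n,i}$ for all $z\in\R$, where $\delta_{n,i} := \frac{\E[|F_i(X)-m_i|^3]}{\sqrt n\,\sigma_i^3}\,C_S$. For the lower bound I would take $i=k$ and estimate, with $t=\sqrt n\,m_k/\sigma_k$, $\mathbb{P}(A_k)=\mathbb{P}(\widehat S_{n,k}\ge -t)\ge 1-G_{n,k}(-t)\ge 1-\Phi(-t)-\delta_{n,k}=\Phi(t)-\delta_{n,k}$, using the symmetry $1-\Phi(-t)=\Phi(t)$; the right‑hand side is precisely $p_{n,k}$.

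For the upper bound I would use that the $A_i$, $i=1,\dots,K$, are pairwise disjoint — already recorded after their definition, as a consequence of (\ref{eq:eta}) and nondegeneracy — so that $\mathbb{P}(A)=\sum_{i=1}^K\mathbb{P}(A_i)$, reducing matters to bounding each summand. Writing $t_i=\sqrt n\,m_i/\sigma_i$, I have $\mathbb{P}(A_i)=1-\mathbb{P}(\widehat S_{n,i}<-t_i)=1-G_{n,i}\bigl((-t_i)^-\bigr)$, and the uniform Berry--Esseen bound together with the continuity of $\Phi$ gives $G_{n,i}\bigl((-t_i)^-\bigr)\ge\Phi(-t_i)-\delta_{n,i}$, hence $\mathbb{P}(A_i)\le\Phi(t_i)+\delta_{n,i}$. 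Summing over $i$ yields $\mathbb{P}(A)\le\sum_{i=1}^K\Phi\bigl(\sqrt n\,m_i/\sigma_i\bigr)+\sum_{i=1}^K\delta_{n,i}=p_n$.

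The only genuinely delicate point, and the step I would be most careful about, is the one‑sidedness. Berry--Esseen controls a cumulative distribution function, i.e.\ left tails, whereas $A_i$ is a right‑tail event, and $\widehat S_{n,i}$ may carry atoms (indeed $\mathbb{P}(\mu_n=\bo)$ can be positive), so one may not simply write $\mathbb{P}(\widehat S_{n,i}\ge -t)=1-G_{n,i}(-t)$. Passing instead to the left limit $G_{n,i}\bigl((-t)^-\bigr)$, which still lies within $\delta_{n,i}$ of $\Phi(-t)$ by the uniform bound and the continuity of $\Phi$, disposes of this with no loss; the lower bound needs no such care, since $\mathbb{P}(\widehat S_{n,i}\ge -t)\ge 1-G_{n,i}(-t)$ holds unconditionally. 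Everything else is bookkeeping.
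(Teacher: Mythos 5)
Your proof is correct and follows essentially the same route as the paper: identify $A_i=\{\eta_{n,i}\ge 0\}$ via Lemma~\ref{folded_average_related_to_folded_sample_mean}, apply Berry--Esseen to $Z_j=F_i(X_j)-m_i$ for the lower bound at $i=k$, and bound $\mathbb{P}(A)$ by summing the per-leg bounds (the paper's sign convention, writing the event as a left-tail event for the flipped sum, merely sidesteps the atom/left-limit bookkeeping you handle explicitly).
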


\begin{proof}
Fix $i \in \{1,..., K\}$. By Lemma \ref{folded_average_related_to_folded_sample_mean}, 
\end{proof}
\begin{align*}
\mathbb{P}(A_i) =
\mathbb{P}\left\{ \eta_{n,i} \geq 0 \right\}
&= \mathbb{P}\left\{\frac{\sqrt{n}(- \eta_{n,i}+m_i)}{\sigma_i} \le \frac{\sqrt{n}m_i}{\sigma_i} \right\}.
\end{align*}
Setting $Z_j = F_i(X_j) -m_i$ ($j=1,\ldots,n$) and $z=\frac{\sqrt{n}m_i}{\sigma_i}$ we can apply the Berry-Esseen theorem, Theorem \ref{berryesseentheorem}, yielding for $i =k$ 
\begin{align*}
    \mathbb{P} \left( A_k \right) & \geq  \Phi\left( \frac{\sqrt{n} m_k}{\sigma_k} \right) - \frac{\mathbb{E} \left[ \vert F_k(X)-m_k\vert|^3\right]}{\sqrt{n} \sigma_k^3} C_S = p_{n,k}
\end{align*}
and 
\begin{align*}
\mathbb{P}(A)  
 &\le \sum_{i=1}^K\Phi\left(\frac{\sqrt{n}m_i}{\sigma_i}\right)+\sum_{i=1}^K \frac{\E\left[ \vert F_i(X)- m_i\vert^3 \right]}{\sqrt{n} \sigma_i^3}\,C_S = p_n\,. 
\end{align*}

\begin{Th}\label{theorem_conditions_fss_2}
A nondegenerate random variable on $X$  on $\cS_K$ with mean $\mu \in L_k$ and second folded moment $\sigma_k^2$, $k\in \{1,\ldots,K\}$, is finite sample sticky of level 
    $$\rho = \min_{n \in \{N,N+1,...,N^l\}} 1- p_n - \frac{n m_k^2}{\sigma_k^2}(1- p_{n,k})\, $$
 with scale $l$ and basis $N$, if there is $l \in \N_{\geq 2}$ such that $ p_n < 1 $ and $p_{n,k} \geq 0$, and if 
 $$p_n + \frac{n m_k^2}{\sigma_k^2}(1- p_{n,k}) < 1\, $$
 for all $n \in \{ N,N+1, ..., N^l\}$.
 \end{Th}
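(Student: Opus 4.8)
The plan is to control the numerator $n\,\EE[\dis^2(\mu_n,\mu)]$ of the modulation $\md$ by a case analysis over the three disjoint events $A_k$, $A\setminus A_k$ and $A^c$, and then to insert the Berry--Esseen estimates of Lemma~\ref{tilde_p_n}. The first step is to pass from distances to folded averages: since $\mu\in L_k$ with $m_k>0$, every geodesic of $\cS_K$ ending at $\mu$ runs through $\bo$ unless it already starts on $L_k$, whence $\dis(q,\mu)=|F_k(q)-m_k|$ for every $q\in\cS_K$. In particular $\EE[\dis^2(X,\mu)]=\sigma_k^2$ and
$$\md=\frac{n\,\EE\bigl[|F_k(\mu_n)-m_k|^2\bigr]}{\sigma_k^2},$$
so it is enough to prove, for each $n\in\{N,\dots,N^l\}$, that $\md>0$ and that $n\,\EE[|F_k(\mu_n)-m_k|^2]\le\sigma_k^2 p_n+nm_k^2(1-p_{n,k})$.

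The second step is the bookkeeping around the target inequality. The bound above yields $\md\le p_n+\tfrac{nm_k^2}{\sigma_k^2}(1-p_{n,k})$, which by the definition of $\rho$ is $\le 1-\rho$; the stated hypotheses $p_n<1$, $p_{n,k}\ge 0$ and $p_n+\tfrac{nm_k^2}{\sigma_k^2}(1-p_{n,k})<1$ force $\rho\in(0,1)$ (so that the notion of finite sample stickiness of level $\rho$ applies), the inequality $\md<1-\rho$ is strict because $\Prb(\mu_n=\bo)>0$ by nondegeneracy so that the estimate $(\eta_{n,k}-m_k)^2\ge m_k^2$ used on $A^c$ is strict on a set of positive probability, and $\md>0$ because $F_k(X)$ is not a.s.\ equal to $m_k$, so $\Prb(\mu_n\neq\mu)>0$. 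Taking the minimum over $n\in\{N,\dots,N^l\}$ and invoking Definition~\ref{definitionfssKspider} then closes the argument.

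The third step is to establish the target inequality via Lemma~\ref{folded_average_related_to_folded_sample_mean}: on $A_k$ one has $F_k(\mu_n)=\eta_{n,k}$; on $A^c$ one has $\mu_n=\bo$, hence $|F_k(\mu_n)-m_k|^2=m_k^2$; and on $A_i$ with $i\neq k$ one has $\eta_{n,k}<F_k(\mu_n)=-\eta_{n,i}\le 0$, hence $|F_k(\mu_n)-m_k|^2=(m_k+\eta_{n,i})^2\le(m_k-\eta_{n,k})^2=(\eta_{n,k}-m_k)^2$, the last two facts being exactly Corollary~\ref{cor:folded_average}. Combining these with the identity $n\,\EE[(\eta_{n,k}-m_k)^2\mathbbm{1}_{A_k}]=\sigma_k^2-n\,\EE[(\eta_{n,k}-m_k)^2\mathbbm{1}_{A_k^c}]$, the split $A_k^c=(A\setminus A_k)\cup A^c$, and the bound $(\eta_{n,k}-m_k)^2\ge m_k^2$ on $\{\eta_{n,k}<0\}\supseteq A_k^c$, one assembles an estimate of the form $n\,\EE[|F_k(\mu_n)-m_k|^2]\le\sigma_k^2\,\Prb(A)+nm_k^2\,\Prb(A_k^c)$ (plus non-negative remainder terms that may be discarded). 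Since $\Prb(A)\le p_n$ and $\Prb(A_k^c)=1-\Prb(A_k)\le 1-p_{n,k}$ by Lemma~\ref{tilde_p_n}, and $\sigma_k^2>0$, $nm_k^2>0$, this is precisely $n\,\EE[|F_k(\mu_n)-m_k|^2]\le\sigma_k^2 p_n+nm_k^2(1-p_{n,k})$.

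The main obstacle is exactly this assembling step. The pointwise estimate $|F_k(\mu_n)-m_k|^2\le(\eta_{n,k}-m_k)^2$ of Corollary~\ref{cor:folded_average} only reproduces $\md\le 1$ and admits no pointwise sharpening on $A_k^c$; worse, in $n\,\EE[(\eta_{n,k}-m_k)^2\mathbbm{1}_{A_k}]=\sigma_k^2-n\,\EE[(\eta_{n,k}-m_k)^2\mathbbm{1}_{A_k^c}]$ the $A^c$-contribution $nm_k^2\,\Prb(A^c)$ to $n\,\EE[\dis^2(\mu_n,\mu)]$ is exactly offset by the corresponding shortfall of variance on $A^c$, so that careless bookkeeping again only gives $\md\le 1$. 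Hence the genuine gain over $\md\le1$ must come from simultaneously isolating $\{\mu_n=\bo\}=A^c$, on which $\dis(\mu_n,\mu)$ equals $m_k$ rather than $|\eta_{n,k}-m_k|$, and balancing the missing variance $n\,\EE[(\eta_{n,k}-m_k)^2\mathbbm{1}_{A_k^c}]$ against both $nm_k^2\,\Prb(A_k^c)$ and the off-leg terms $\sum_{i\neq k}n\,\EE[(m_k+\eta_{n,i})^2\mathbbm{1}_{A_i}]$ without double counting. Concretely this reduces to comparing truncated second moments of the folded averages $\eta_{n,i}$ with $\sigma_k^2$ and with $\EE[\eta_{n,k}^2]=\sigma_k^2/n+m_k^2$; such comparisons are automatic once $nm_k^2\ge\sigma_k^2$, but in the remaining admissible range they genuinely rely on the theorem's hypotheses --- notably $p_n<1$ and $p_n+\tfrac{nm_k^2}{\sigma_k^2}(1-p_{n,k})<1$, which together exclude the degenerate small-sample regime in which $\mu_n$ concentrates at $\bo$ --- and I would expect this bookkeeping, rather than any single inequality, to be the technical heart of the proof.
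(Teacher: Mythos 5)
You assemble exactly the ingredients of the paper's argument --- the identity $\dis(\mu_n,\mu)=\vert F_k(\mu_n)-m_k\vert$, the disjoint decomposition into $A_k$, $A\setminus A_k$, $A^c$, the value $\dis^2(\mu_n,\mu)=m_k^2$ on $A^c$, and the bounds $\Prb(A)\le p_n$, $\Prb(A_k^c)\le 1-p_{n,k}$ from Lemma \ref{tilde_p_n} --- but the central estimate is never actually proved. Everything hinges on showing $n\,\E\bigl[\dis^2(\mu_n,\mu)\mathbbm{1}_A\bigr]\le \sigma_k^2\,p_n$, and, as you observe yourself, the tools you put on the table (the pointwise bound $\vert F_k(\mu_n)-m_k\vert^2\le(\eta_{n,k}-m_k)^2$ of Corollary \ref{cor:folded_average} together with $(\eta_{n,k}-m_k)^2\ge m_k^2$ on $A_k^c$) only return $\md\le 1$, the gain on $A^c$ being exactly cancelled by the shortfall of variance there. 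Your closing paragraph does not repair this: it concedes that for $nm_k^2<\sigma_k^2$ the required comparison of truncated moments is not automatic and merely predicts that some bookkeeping ``relying on the theorem's hypotheses'' will furnish it. That hope is misplaced on two counts: the case $nm_k^2\ge\sigma_k^2$, where you say the comparison is automatic, is vacuous here, because then $p_n+\tfrac{nm_k^2}{\sigma_k^2}(1-p_{n,k})\ge p_n+1-p_{n,k}>1$ (as $p_n>p_{n,k}$), so the theorem's bound is trivial; and the hypotheses $p_n<1$, $p_{n,k}\ge0$, $p_n+\tfrac{nm_k^2}{\sigma_k^2}(1-p_{n,k})<1$ enter only when the bound on $\md$ is converted into a level $\rho\in(0,1)$ in the sense of Definition \ref{definitionfssKspider}; they are of no help in deriving the bound on $\md$ itself. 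So what you have is a correct reduction plus an accurate diagnosis of where the difficulty sits, not a proof.

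For comparison, the paper closes the argument in one line: after splitting along $A$ and $A^c$ it replaces the conditional expectation given $A$ by the unconditional one, so that the $A$-term is bounded by $\tfrac{n}{\sigma_k^2}\E[\dis^2(\mu_n,\mu)]\,\Prb(A)\le 1\cdot p_n$ via Corollary \ref{cor:folded_average} and Lemma \ref{tilde_p_n}, while the $A^c$-term is $\tfrac{nm_k^2}{\sigma_k^2}\Prb(A^c)\le\tfrac{nm_k^2}{\sigma_k^2}(1-p_{n,k})$ because $A^c\subseteq A_k^c$. This replacement (written as an equality in the paper's display) is exactly the conditional-versus-unconditional comparison $\E[\dis^2(\mu_n,\mu)\mid A]\le\E[\dis^2(\mu_n,\mu)]$, equivalently $n\,\E[\dis^2(\mu_n,\mu)\mathbbm{1}_A]\le\sigma_k^2\Prb(A)$, that your obstacle paragraph circles around without resolving; so you have correctly located the delicate step, but you neither adopt the paper's shortcut nor supply an alternative justification for it. To turn your proposal into a proof you would have to establish that comparison (or bound $n\,\E[\dis^2(\mu_n,\mu)\mathbbm{1}_A]$ by $\sigma_k^2 p_n$ by some other route); in its present form the key inequality is asserted, not proved.
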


\begin{proof}
By the law of total expectation, Corollary \ref{cor:folded_average} and Lemma  \ref{tilde_p_n}, exploiting that $A^c \subseteq A_k^c$, we obtain
\begin{align*}
    \md &= \EE[\md] = \frac{n}{\sigma^2_k}
    \E\left[ \E [\dis^2(\mu_n,\mu) \mid A] \mathbb{P}(A) + \E [\dis^2(\mu_n,\mu) \mid A^c]\mathbb{P}(A^c) \right]\, \\
    &= \frac{n}{\sigma^2_k}
    \E [\dis^2(\mu_n,\mu)] \mathbb{P}(A) +
    \frac{n m_k^2}{\sigma^2_k}\mathbb{P}(A^c)\, \\
    &\le p_n + \frac{n m_k^2}{\sigma^2_k}(1-p_{n,k}).
\end{align*}
\end{proof}

\section{Example and Simulations}
\begin{Ex}
For $t> 0$ let $X_t$ be a random variable on $\Sp_K$ with probabilities 
\begin{align*}
\mathbb{P} \left\lbrace X_t = (K - 1 + Kt,K) \right\rbrace = \frac{1}{K} = \mathbb{P} \left\lbrace X_t = (1,i) \right\rbrace,  \qquad k = i,...,K-1\,,
\end{align*}
and first moments
\begin{align*}
m_i = \frac{4-K(2+t)}{K}, \qquad m_K = t.
\end{align*}
Hence $X_t$ is nonsticky with $\mu \in L_K$ for $t > 0$  and $$\mathbb{E}[\dis^3(X_t,0)] = \frac{K-1}{K} + \frac{(K-1+Kt)^3}{K}< \infty.$$
We can therefore apply Theorem \ref{theorem_conditions_fss_2}. Figure \ref{picturestheorem2} illustrates intervals of sample sizes displaying finite sample stickiness for the 3-Spider and $ t \in \{10^{-2},10^{-3}, 10^{-4}\}$. The explicit bound for the modulation derived by Theorem \ref{theorem_conditions_fss_2} is given as an orange dashed line.
\end{Ex}

\vspace*{-0.5cm}

\begin{figure}[h!]
     \centering
     \begin{subfigure}[h]{0.32\textwidth}
         \includegraphics[width=\textwidth]{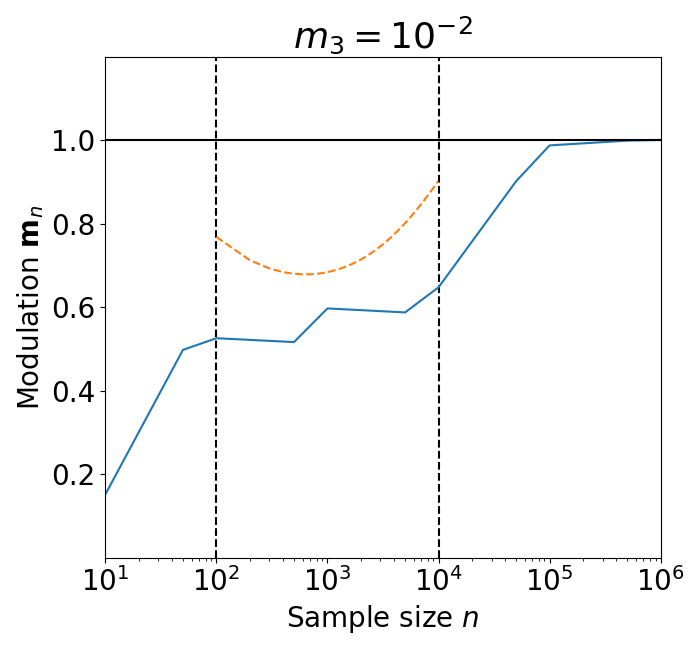}
         \caption{Finite sample stickiness of level $\rho = 0.1$ with scale $2$ and base $N = 100$.}
     \end{subfigure}
     \hfill
     \begin{subfigure}[h]{0.32\textwidth}
         \includegraphics[width=\textwidth]{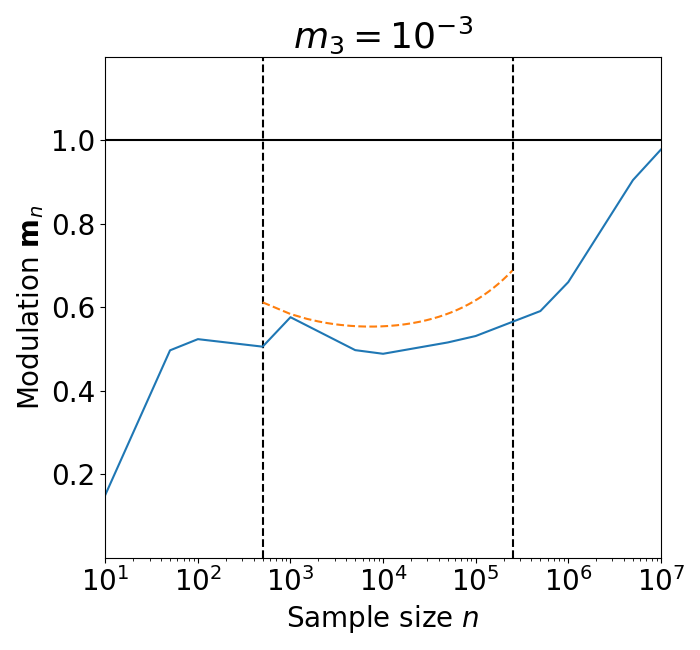}
         \caption{Finite sample stickiness of level $\rho = 0.31$ with scale $2$ and base $N = 500$.}

     \end{subfigure}
     \hfill
          \begin{subfigure}[h]{0.32\textwidth}
         \includegraphics[width=\textwidth]{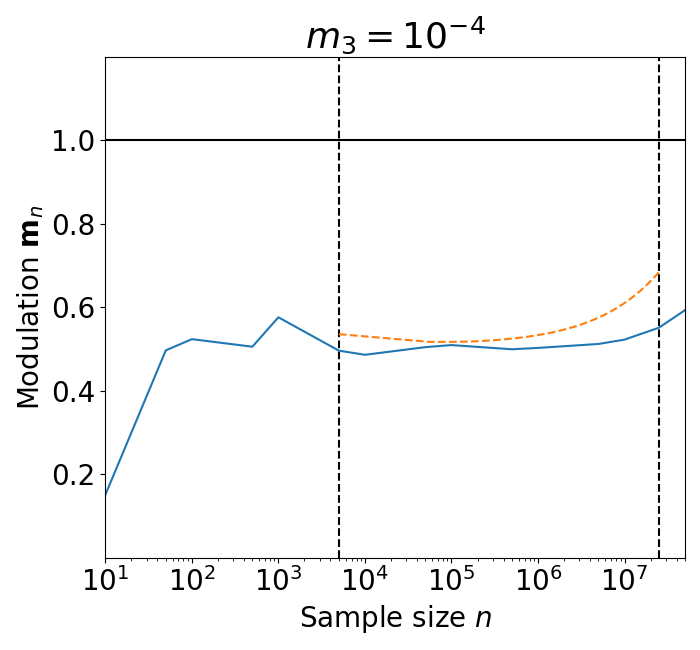}
         \caption{Finite sample stickiness of level $\rho = 0.31$ with scale $2$ and base $N = 5000$.}
     \end{subfigure}
        \caption{}\label{picturestheorem2}

\end{figure}


\section*{Acknowledgment}
All authors acknowledge DFG HU 1575/7, DFG GK 2088, DFG SFB 1465 and the Niedersachsen Vorab of the Volkswagen Foundation. The work was done partially while the 2nd author was participating in the program of the Institute for Mathematical Sciences, National University of Singapore, in 2022.

\DirectBibliography{shape.bib,trees.bib}
\end{document}